\DeclareMathOperator*{\argmin}{argmin} 
\title{Smoothing, Clustering, and Benchmarking for Small Area Estimation}
\author{Rebecca C. Steorts,  Carnegie Mellon University}
\begin{document}
\newcounter{itemnum}
\newtheorem{lemma}{Lemma}
\newtheorem{theorem}{Theorem}
\newtheorem{corollary}{Corollary}

\theoremstyle{definition}
\newtheorem{definition}{Definition}
\newtheorem*{remark}{Remark}

\newtheoremstyle{lemma}
{\topsep} 
{\topsep} 
{\it} 
{} 
{\bf} 
{:} 
{0.5em} 
{} 
\theoremstyle{lemma}
\newtheorem{lem}{Lemma}

\newcommand{\la}{\lambda}
\newcommand{\bl}{\bm{\lambda}}
\newcommand{\bsee}{\bm{\Sigma_e}}
\newcommand{\bo}{\bm{\Omega}}
\newcommand{\bD}{\bm{\Delta}}
\newcommand{\bT}{\bm{t}}
\newcommand{\bc}{\bm{C}}
\newcommand{\bww}{\bm{W}}
\newcommand{\bbb}{\bm{B}}
\newcommand{\bg}{\bm{G}}
\newcommand{\bmy}{\bm{y}}
\newcommand{\bdd}{\bm{\delta}}
\newcommand{\bdh}{\hat{\bm{\delta}}}

\newcommand{\thatb}{\hat{\bm{\theta}}^B}
\newcommand{\thatt}{\hat{\bm{\theta}}}
\newcommand{\thattt}{\hat{\bm{\theta_2}}}
\newcommand{\thp}{\hat{\bm{\theta}}^p}
\newcommand{\thg}{\hat{\bm{\theta}}^g}
\newcommand{\thbm}{\hat{\bm{\theta}}^{BM}}

\newcommand{\bty}{\tilde{\bm{\theta}}_{y,t}}
\newcommand{\su}{{\sigma_u^2}}
\newcommand{\thh}{{\theta}}
\newcommand{\bgam}{\bm{\gamma}}
\newcommand{\bth}{\bm{\theta}}
\newcommand{\bxi}{\bm{x}_i}
\newcommand{\bse}{\bm{\Sigma_e}^{-1}}
\newcommand{\bh}{\bm{H}}
\newcommand{\bhh}{\bm{h}}
\newcommand{\bd}{\bm{D}}
\newcommand{\ba}{\bm{A}}
\newcommand{\iprime}{{i^\prime}}
\newcommand{\jprime}{{j^\prime}}
\newcommand{\bb}{\bm{\beta}}
\newcommand{\bys}{\bm{y^*}}
\newcommand{\byst}{\bm{y^{*T}}}
\newcommand{\V}       {\text{Var}}
\newcommand{\de}       {\mbox{$\delta_i$}}
\newcommand{\tij}       {\mbox{$\theta_{ij}$}}
\newcommand{\htijb}       {\mbox{$\hat{\theta}_{ij}^B$}}

\newcommand{\thbariw}     {\mbox{$\bar{\hat{\theta}}_{iw}^{B}$}}
\newcommand{\tbw}       {\mbox{$\bar{\theta}_{w}^B$}}
\newcommand{\tbiw}       {\mbox{$\bar{\theta}_{iw}$}}
\newcommand{\thwb}       {\mbox{$\bar{\hat\bm{{\theta}}}_{w}^B$}}
\newcommand{\thw}       {\mbox{$\bar{\hat{\theta}}_{w}^B$}}
\newcommand{\thbiw}       {\mbox{$\bar{\hat{\theta}}_{iw}^B$}}
\newcommand{\that}       {\mbox{$\hat{\theta}$}}
\newcommand{\thij}       {\mbox{$\hat{\theta}_{ij}$}}
\newcommand{\thhi}       {\mbox{$\hat{\theta}_{i}$}}
\newcommand{\thi}       {\theta_i}

\newcommand{\lao}       {\mbox{$\lambda_{1i}$}}
\newcommand{\lat}       {\mbox{$\lambda_{2}$}}
\newcommand{\latt}       {\mbox{$\lambda_{3i}$}}

\newcommand{\cd}{\buildrel d \over \longrightarrow}
\newcommand{\cp}{\buildrel P \over \longrightarrow}
\newcommand{\hatbb}{\boldsymbol{\hat{b}}}
\newcommand{\hatbB}{\boldsymbol{\hat{B}}}
\newcommand{\hatbd}{\boldsymbol{\hat{d}}}
\newcommand{\commentt}[1]{}
\newcommand{\myvfil}[1]{\vskip 0pt plus #1fill}
\newcommand{\lik}{\ell_y(\theta)}
\newcommand{\likil}{\ell_y(\theta_i^{(l)})}
\newcommand{\hd}{\hfill$\diamondsuit$}
\newcommand{\aaa}{\epsilon}
\newcommand{\lt}{\left}
\newcommand{\rt}{\right}
\newcommand{\mbi}{\max_{1 \leq i \leq m} \bxi'\bb}
\newcommand{\bs}{B_{i*}}
\newcommand{\D}{\Delta}
\newcommand{\bi}{B_{i}}
\newcommand{\utheta}        {\mbox{$\boldsymbol{\theta}$}}
\newcommand{\thhj}{\hat{\theta}_j}
\newcommand{\thhij}{\hat{\theta}_{ij}}
\newcommand{\thiHB}{\hat{\theta}_i^{HB}}
\newcommand{\thih}{\hat{\theta}_i^H}
\newcommand{\thit}{\tilde{\theta}_i^H}

\newcommand{\tr}{\text{tr}}
\newcommand{\btt}{\boldsymbol{\theta}}

\newcommand{\ttt}{\boldsymbol{t}}
\newcommand{\bhat}{\boldsymbol{\hat{\beta}}}
\newcommand{\thb}{\bar{\theta}}
\newcommand{\bx}{\boldsymbol{x}}
\newcommand{\pxv}{{P}_X^V}
\newcommand{\pxvt}{\bar{P}_x^{V'}}
\newcommand{\pxvs}{\bar{P}_x^{V_*}}
\newcommand{\bv}{\boldsymbol{v}}
\newcommand{\bu}{\boldsymbol{u}}
\newcommand{\ur}{\boldsymbol{r}}
\newcommand{\uphi}{\boldsymbol{\phi}}
\newcommand{\uone}{\boldsymbol{1}}
\newcommand{\ue}{\boldsymbol{e}}
\newcommand{\uc}{\boldsymbol{c}}
\newcommand{\bbi}{\boldsymbol{b}_i}
\newcommand{\uw}{\boldsymbol{w}}
\newcommand{\bz}{\boldsymbol{z}}
\newcommand{\be}{\boldsymbol{e}}
\newcommand{\by}{\boldsymbol{y}}
\newcommand{\utt}{\boldsymbol{t}}
\newcommand{\bzero}{\boldsymbol{0}}
\newcommand{\util}{\boldsymbol{\tilde{u}}}
\newcommand{\utils}{\boldsymbol{\tilde{u}_*}}
\newcommand{\btil}{\boldsymbol{\tilde{\beta}}}
\newcommand{\btils}{\boldsymbol{\tilde{\beta}^{*}}}
\newcommand{\btilf}{(X'V^{-1}X)^{-1}X'V^{-1}\that}
\newcommand{\btilfs}{(X'V_*^{-1}X)^{-1}X'V_*^{-1}\that}
\newcommand{\bxij}{\boldsymbol{x_{ij}}}
\newcommand{\bxj}{\boldsymbol{x_{j}}}
\newcommand{\bei}{\boldsymbol{e_{i}}}
\newcommand{\bej}{\boldsymbol{e_{j}}}
\newcommand{\bek}{\boldsymbol{e_{k}}}
\newcommand{\bbary}{\boldsymbol{\bar{y}}}
\newcommand{\thet}{\boldsymbol{\theta}}

\newcommand{\vv}        {V^{-1}}
\newcommand{\vs}        {V^{-1}_*}
\newcommand{\vvs}        {V_{*}}
\newcommand{\sig}        {\Sigma}
\newcommand{\sm}        {\sqrt{m}}
\newcommand{\thhk}        {\mbox{$\hat{\theta}_k$}}
\newcommand{\thho}        {\mbox{$\hat{\theta}_1$}}
\newcommand{\thhm}        {\mbox{$\hat{\theta}_m$}}
\newcommand{\thj}        {\mbox{$\hat{\theta}_j$}}
\newcommand{\ttil}        {\mbox{$\tilde{\boldsymbol{\theta}}$}}

\newcommand{\thk}        {\mbox{$\hat{\theta}_k$}}
\newcommand{\thbi}        {\mbox{$\hat{\theta}_i^B$}}
\newcommand{\thbis}        {\mbox{$\hat{\theta}_{i*}^B$}}
\newcommand{\thebis}        {\mbox{$\hat{\theta}_{i*}^{EB}$}}
\newcommand{\thebjs}        {\mbox{$\hat{\theta}_{j*}^{EB}$}}
\newcommand{\thbj}        {\mbox{$\hat{\theta}_j^B$}}
\newcommand{\thbjs}        {\mbox{$\hat{\theta}_{j*}^B$}}
\newcommand{\thbk}        {\mbox{$\hat{\theta}_k^B$}}
\newcommand{\thebi}       {\mbox{$\hat{\theta}_i^{EB}$}}
\newcommand{\thebj}       {\mbox{$\hat{\theta}_j^{EB}$}}
\newcommand{\theblupi}    {\mbox{$\hat{\theta}_i^{EBM1}$}}
\newcommand{\thbmi}    {\mbox{$\hat{\theta}_i^{BM1}$}}
\newcommand{\theblupis}       {\mbox{$\hat{\theta}_{i*}^{EBM1}$}}

\newcommand{\thhw}       {\mbox{$\bar{\hat{\theta}}_{w}$}}
\newcommand{\thbarw}     {\mbox{$\bar{\hat{\theta}}_{w}^{B}$}}
\newcommand{\thbarwb}     {\mbox{$\bar{\hat{\theta}}_w^{B}$}}
\newcommand{\thbarwbs}     {\mbox{$\bar{\hat{\theta}}_{w*}^{B}$}}
\newcommand{\thbarweb}     {\mbox{$\bar{\hat{\theta}}_w^{EB}$}}
\newcommand{\thbarwebs}     {\mbox{$\bar{\hat{\theta}}_{w*}^{EB}$}}
\newcommand{\sbb}     {\mbox{$\sigma_b^2$}}
\newcommand{\sut}     {\mbox{$\tilde{\sigma}_u^2$}}
\newcommand{\suh}     {\mbox{$\hat{\sigma}_u^2$}}
\newcommand{\sus}     {\mbox{${\sigma}_u^{*2}$}}
\newcommand{\sust}     {\mbox{$\tilde{\sigma}_u^{*2}$}}
\newcommand{\hvis}     {\mbox{$\boldsymbol{x}_i'(X'V^{-1}_*X)^{-1}\boldsymbol{x}_i$}}
\newcommand{\hvi}     {\mbox{$\boldsymbol{x}_i'(X'V^{-1}X)^{-1}\boldsymbol{x}_i$}}
\newcommand{\hij}     {\mbox{$\boldsymbol{x}_i'(X'X)^{-1}\boldsymbol{x}_j$}}
\newcommand{\hvk}     {\mbox{$\boldsymbol{x}_k'(X'V^{-1}X)^{-1}\boldsymbol{x}_k$}}
\newcommand{\hj}     {\max_{1\leq j \leq m} h_j}
\newcommand{\hi}     {\max_{1\leq i \leq m} h_i}
\newcommand{\hk}     {\mbox{$\boldsymbol{x}_k'(X'X)^{-1}\boldsymbol{x}_k$}}
\newcommand{\hvik}     {\mbox{$\boldsymbol{x}_i'(X'V^{-1}X)^{-1}\boldsymbol{x}_k$}}
\newcommand{\hii}     {\mbox{$\boldsymbol{x}_i'(X'X)^{-1}\boldsymbol{x}_i$}}
\newcommand{\ut}     {\mbox{$\tilde{\boldsymbol{u}}$}}
\newcommand{\uts}     {\mbox{$\tilde{\boldsymbol{u}}_*$}}
\newcommand{\ub}     {\mbox{${\boldsymbol{u}}$}}
\newcommand{\li}     {\mbox{${\lambda_i}$}}
\newcommand{\lj}     {\mbox{${\lambda_j}$}}
\newcommand{\lk}     {\mbox{${\lambda_k}$}} 
\newcommand{\co}     {\text{Cov}}
\newcommand{\lp}     {\left(}
\newcommand{\rp}     {\right)}
\newcommand{\lb}     {\left[}
\newcommand{\rb}     {\right]}
\newcommand{\gos}     {G_1^{*}}
\newcommand{\gtos}     {G_2^{*}}
\newcommand{\gts}     {G_3^{*}}
\newcommand{\g}     {\mbox{$X(X'V^{-1}X)^{-1}X'$}}
\newcommand{\xg}     {\mbox{$(X'V^{-1}X)^{-1}X'$}}
\newcommand{\bw}{\boldsymbol{w}}
\newcommand{\bci}{\boldsymbol{c}_i}
\newcommand{\bgi}{\boldsymbol{g}_i}
\newcommand{\bxk}{\boldsymbol{x}_k}
\newcommand{\byi}{\boldsymbol{y}_i}
\newcommand{\bzi}{\boldsymbol{z}_i}
\newcommand{\bt}{\boldsymbol{\tilde{\beta}}}
\newcommand{\lai}     {\lambda_i}
\newcommand{\gai}     {\gamma_i}
\newcommand{\ma}     {\max_{1 \leq i \leq m}}
\newcommand{\mak}     {\max_{1 \leq k \leq m}}

\newcommand{\cov}     {\text{Cov}}


\newcommand{\zv}{{\bf z}_v}
\newcommand{\ua}{{\bf u}_a}
\newcommand{\uav}{{\bf u}_{A(v)}}
\newcommand{\sa}{\alpha^2_a}
\newcommand{\bet}{\boldsymbol\beta_v}
\newcommand{\sv}{\nu^2_v}
\newcommand{\se}{\sigma^2_v}
\newcommand{\R}{\mathbb{R}}
\newcommand{\bmu}{\boldsymbol\mu}
\newcommand{\bSigma}{\boldsymbol\Sigma}

\maketitle
\begin{abstract}
  We develop constrained Bayesian estimation methods for small area problems:
  those requiring smoothness with respect to similarity across areas, such as
  geographic proximity or clustering by covariates; and benchmarking
  constraints, requiring (weighted) means  of estimates to agree
  across levels of aggregation. We develop methods for constrained estimation
  decision-theoretically and discuss their geometric interpretation.  Our constrained estimators are the solutions to tractable optimization problems and
   have closed-form solutions.  Mean squared errors of the constrained
  estimators are calculated via bootstrapping. Our techniques are free of distributional assumptions and apply whether the  estimator is linear or non-linear, univariate or multivariate.  We illustrate our methods using data from the U.S. Census's Small Area Income and Poverty Estimates program.
\end{abstract}

\section{Introduction}

Small area estimation (SAE) deals with
estimating many parameters, each associated with an ``area''---a geographic
domain, a demographic group, an experimental condition, etc.  
Areas  are ``small'' since there is little or no information about any one area.
Estimates of a parameter based only on observations from the
associated area, called direct estimates, are imprecise.  To increase
precision, one tries to ``borrow strength'' from related areas, and
hierarchical and empirical Bayesian models are one way to do
so.
Since the pioneering work of
\cite{fay_1979, battese_1988}, such models have dominated 
SAE, with many
successful applications in official statistics,
sociology, epidemiology, political science, business, etc.~\citep{rao_2003}. Recently, 
SAE has been applied in other fields, such as neuroscience, where it has been shown to do as well as common approaches such as smoothed ridge regression and elastic net \citep{Wehbe-et-al-regularized-brain-reading}.

We extend 
these
classical
approaches in two directions, both of which have been the subject of recent
interest in the SAE literature.  One
direction
is to directly take account of information
about the proximity of areas in space or time.
In many applications, it is reasonable to expect that the parameters will be
smooth, so
that 
nearby areas
will
have similar parameters,
but
this is not
altogether standard within SAE \citep{rao_2003}.  Incorporating
spatial or temporal dependence directly into Bayesian models leads to
statistical and computational difficulties, yet it seems misguided 
to discard
such information.
The other direction is
``benchmarking,'' the imposition of consistency constraints on (weighted)
averages
of the parameter estimates.  A simple form of 
benchmarking is when the average of the parameter
estimates must match a known global average. 
When
there are multiple levels of aggregation for the
estimates, there can be issues of internal consistency as well.

We provide a unified approach to smoothing and benchmarking by regarding them
both as {\em constraints} on Bayes estimates.  Benchmarking corresponds to
equality constraints on global averages and variances.  Similarly, smoothing
corresponds to an inequality constraint on the ``roughness'' of estimates (how
much the parameter estimates of nearby areas differ).  The motivation of this smoothing is based upon manifold learning and frequentist non-parametrics, where loss functions are augmented by a penalty. Such a penalty term is in the spirit of ridge regression, where a transformation of the parameters is performed and additional shrinkage is carried out.  Our penalty corresponds to 
 how much estimates at nearby points in the domain
should tend to differ.

Decision-theoretically,
we obtain smoothed, benchmarked estimates by minimizing the Bayes risk subject
to these constraints, extending the approaches of \cite{datta_2011,ghosh_2013}
(themselves in the spirit of \cite{louis_1984} and \cite{ghosh_1992}).
Geometrically, the constrained Bayes estimates are found by projecting the
unconstrained estimates into the feasible set.  If the constraints are linear, then the the resulting optimization can be solved in closed form, requiring nothing more than basic matrix
operations on the unconstrained Bayes estimates.
 When we use
equality
constraints that are quadratic in nature, the problem cannot be solved in closed form, and the optimization is in fact non-convex.

Previous efforts at smoothing in SAE
problems have smoothed either the raw data or direct estimates.  In contrast,
we smooth estimates based on models which do not themselves include spatial
structure.  Computationally, this is much easier than expanding the models. Our optimization problems
can be solved in closed form and retain the advantages of
model-based estimation.
This approach to smoothing also combines naturally with the imposition of benchmarking constraints.

Another strong advantage of our decision-theoretic and geometric approach is its
extreme generality.  We require no distributional assumptions on the data or on
the unconstrained Bayes estimator.  Our results apply whether the unconstrained
estimator is linear or non-linear, whether the parameters being estimated are
univariate or multivariate, and whether there is a single level of aggregation
(``area''-level models) or multiple (``unit''-level models).  The relevant
notion of proximity between areas may be spatial, temporal, or more abstract.
It can even include clustering on covariates not directly included in the
Bayesian model.

\S
\ref{sec:smooth} describes our approach to smoothing in small area estimation.
This is extended to add benchmarking constraints in
\S \ref{sec:benchmarking}.  In both sections, area- and unit-level
results are derived using a single framework.
\S\ref{sec:bootstrap} discusses uncertainty quantification based on a bootstrap approach.
\S \ref{app:saipe} presents
an application to estimating the number of children living in poverty in each
state, using data from the U.S.\ Census Bureau's Small Area Income and Poverty
Estimates Program (SAIPE).  

\subsection{Notation and Terminology}
\label{sec:notation}

We assume $m$ \textbf{areas}, and for each area $i$, we estimate an associated
scalar quantity $\theta_i$, collectively $\bm{\theta}$.  ``Areas'' are often
spatial regions, where they might be different
demographic groups or experimental conditions.  Allowing $\theta_i$ to be
vectors rather than scalars is straightforward (see the remark at the end of \S
\ref{sec:smooth}).  Each area has a vector of
covariates $\bm x_i$, which may include spatial or temporal coordinates, when
applicable.  
Conditioning
a Bayesian model on an observed response $\bm{y}$ and covariates $\bm{x}$ leads to a Bayes estimate $\thbi$
for each area.  (Note that $\thbi$ is obtained by conditioning on {\em all} the
observations and covariates, not just those of area~$i$.)

The loss function is weighted squared error, where the weight for area~$i$ is
$\phi_i > 0 $, and the total loss from the action (estimate) $\bdd$ is
$\sum_i{\phi_i (\theta_i - \delta_i)^2}.$
In many
SAE applications, $\phi_i$ reflect variations in measurement precision and can
be obtained from the survey design \citep{rao_2003,pfeffermann_2013}. We assume
they are known (however, in practice they usually must be estimated). Define $\Phi
= \text{Diag}(\phi_i)$, which is positive definite by construction.

\subsection{Related Work}

\cite{pfeffermann_2013} provided a comprehensive review of the SAE and
benchmarking literature. Our work is twofold: smoothing SAE, and its combination with benchmarking.  Our SAE approach is decision-theoretic  with the addition of a
\emph{smoothness penalty} in the loss function. Our approach to benchmarking
with smoothing generalizes the benchmarking work of \cite{datta_2011,ghosh_2013}.

It is thought that spatial correlations {\em may} help SAE
models, leading to approaches such as correlated sampling error, spatial
dependence of small area effects, time series components, etc., reviewed in
\cite{ghosh_1994}.  More recently, spatially-correlated random effects have
been incorporated into empirical Bayesian models \citep{pratesi_2008} and into
hierarchical Bayesian models \citep{souza_2009}.  These approaches have all
been highly application-specific and hard to integrate with benchmarking, and
they greatly increase the computational cost of obtaining estimates.  Our goal
is to overcome these limits by taking a radically different approach.

Thus, we employ ideas about smoothing on graphs and manifolds from
frequentist non-parametrics and machine learning.  In particular, we take
advantage of ``Laplacian'' regularization ideas
\citep{Belkin-Niyogi-Sindhwani,Corona-et-al-laplacian-rkhs,Lee-Wasserman-spectral-connectivity},
where the loss function is augmented by a penalty term which reflects how much
estimates at nearby points in the domain should tend to differ.  Such regularization is
designed to ensure that estimates vary smoothly with respect to the intrinsic
geometry of some underlying graph or manifold. (Smoothness on a domain is
represented mathematically by the domain's Laplacian operator, which is the
generator for diffusion processes.)  This generalizes the roughness or
curvature penalties from spline smoothing \citep{Wahba-spline-models}
to domains more geometrically complicated than~$\mathbb{R}^m$.  We are unaware
of any previous application of Laplacian regularization to SAE problems,
though spline smoothing is often used in spatial statistics, including such
classic SAE tasks as estimating disease rates
\citep{Kafadar-smoothing-geographical}.


\section{Smoothing for Small Area Estimation}
\label{sec:smooth}

In this section, we develop
estimators that minimize
posterior risk while still
imposing smoothness on the estimate. The kind of smoothing we impose derives
from the literature on Laplacian regularization and semi-supervised learning
\citep{Belkin-Niyogi-Sindhwani,Corona-et-al-laplacian-rkhs,Lee-Wasserman-spectral-connectivity}.
The estimators we derive do not depend on the distributional assumptions of the
Bayesian models and are equally applicable to spatial smoothing or more
abstract clustering.  We do assume that the smoothing or clustering is done
separately from the estimation for each area or domain, and we also take weighted
squared error as the loss function.  In
\S \ref{sec:benchmarking}, we extend our approach to include
benchmarking.

\subsection{General Result}


We begin by introducing the symmetric matrix $Q$, with elements $q_{i\iprime} \geq 0$, to gauge how important it is that the estimate of $\theta_i$ be close to the estimate of $\theta_{\iprime}$.
It may often be the case that $q_{i\iprime}=q(\bm x_i, \bm x_{\iprime})$, i.e., the degree of smoothing of $\delta_i$ and $\delta_{\iprime}$ is a function of the covariates $\bm x_i$ and $\bm x_{\iprime}$.
Note also that the $q_{i\iprime}$ may be discrete-valued, corresponding to
clustering of areas, or continuous-valued, corresponding to a metric space of
areas.

A natural measure of the smoothness of $\bdd$ is the
$Q$-weighted sum of squared differences between elements, $
\sum_{i,\iprime}{(\delta_{i}-\delta_{\iprime})^2 q_{i\iprime}}.$
Hence, we add
a penalty
term $\gamma \sum_{i,\iprime}{(\delta_{i}-\delta_{\iprime})^2 q_{i\iprime}}$ to
our objective function, with the penalty factor~$\gamma\ge0$ chosen to
specify the overall importance of smoothness.
(We address the choice of~$Q$ below and of~$\gamma$ in \S \ref{sec:xv}.)


Therefore, we seek to minimize the posterior risk of the loss function
\begin{equation}
  \label{sa_sp}
  L(\bm{\theta}, \bdd) = \sum_i{\phi_i (\theta_i - \delta_i)^2} + \gamma
  \sum_{i,\iprime}{(\delta_i - \delta_\iprime)^2 q_{i\iprime}}.
\end{equation}
Minimizing the posterior expectation of \eqref{sa_sp} is equivalent to
minimizing
\begin{equation}
  \label{fh_sp}
   \sum_i \phi_i E[(\theta_i - \delta_i)^2 | \bmy ] +
    \gamma \sum_{i,\iprime} (\delta_i - \delta_\iprime)^2 q_{i\iprime}.
\end{equation}
Define $\Omega $ to be a matrix such that
$ \sum_{i,\iprime}{(\delta_i - \delta_\iprime)^2 q_{i\iprime}} = \bm{\delta}^T \Omega \bm{\delta} .$ (See Lemma~\ref{lemma:squared-differences} for details.)
Then minimizing~(\ref{fh_sp}) is equivalent to minimizing
\begin{equation}
  \label{fh_sp_2}
  (\bdd - \hat{\bm{\theta}}^B)^T\Phi (\bdd - \hat{\bm{\theta}}^B) + \gamma\bdd^T \Omega
  \bdd.
\end{equation}
 (We refer to \citep{datta_2011, ghosh_2013} for details on this equivalence.) 
Then we have the following result.

\begin{theorem}
  \label{area}
  \label{thm:general-smoothing}
  The smoothed Bayes estimator is
  \[
  \tilde{\bm\theta}^S = (I_m + \gamma\Phi^{-1} \Omega)^{-1}\hat{\bm{\theta}}^B.
  \]
\end{theorem}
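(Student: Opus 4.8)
The plan is to treat the minimization of the quadratic objective~\eqref{fh_sp_2} as a standard unconstrained optimization problem over $\bdd \in \R^m$, find the stationary point by setting the gradient to zero, and then confirm it is the unique global minimizer by checking strict convexity. First I would record the two structural facts needed: that $\Omega$ is symmetric, which is inherited from the symmetry of $Q$ via Lemma~\ref{lemma:squared-differences}, and that $\Omega$ is positive semidefinite, which follows from the identity $\bdd^T \Omega \bdd = \sum_{i,\iprime}(\delta_i - \delta_\iprime)^2 q_{i\iprime} \geq 0$ together with $q_{i\iprime} \geq 0$. Since $\Phi$ is positive definite by construction and $\gamma \geq 0$, the matrix $\Phi + \gamma \Omega$ is then positive definite, hence invertible; this is what licenses every inverse appearing below.

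Next I would differentiate. Writing $f(\bdd) = (\bdd - \thatb)^T \Phi (\bdd - \thatb) + \gamma \bdd^T \Omega \bdd$ and using the symmetry of $\Phi$ and $\Omega$, the gradient is $\nabla f(\bdd) = 2\Phi(\bdd - \thatb) + 2\gamma \Omega \bdd$. Setting $\nabla f(\bdd) = \bzero$ and rearranging yields the normal equation $(\Phi + \gamma \Omega)\bdd = \Phi \thatb$, whose unique solution is $\bdd = (\Phi + \gamma \Omega)^{-1}\Phi\, \thatb$.

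The final step is an algebraic rewriting to match the stated form. Factoring $\Phi$ out on the left gives $\Phi + \gamma\Omega = \Phi(I_m + \gamma\Phi^{-1}\Omega)$, so that $(\Phi + \gamma\Omega)^{-1}\Phi = (I_m + \gamma\Phi^{-1}\Omega)^{-1}\Phi^{-1}\Phi = (I_m + \gamma\Phi^{-1}\Omega)^{-1}$. This produces $\bdd = (I_m + \gamma\Phi^{-1}\Omega)^{-1}\thatb = \tilde{\bm\theta}^S$, as claimed. To close the argument I would observe that the Hessian $\nabla^2 f = 2(\Phi + \gamma\Omega)$ is positive definite, so $f$ is strictly convex and the stationary point just found is its unique global minimizer, rather than merely a critical point.

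I expect the main subtlety to be bookkeeping in the matrix algebra rather than a genuine obstacle: one must verify that $\Omega$ is symmetric positive semidefinite so the relevant inverses exist, and one must be careful that $\Phi$ and $\Omega$ need not commute, so the factorization $\Phi(I_m + \gamma\Phi^{-1}\Omega)$ has to keep $\Phi$ on the left throughout the cancellation. Everything else is routine differentiation of a quadratic form.
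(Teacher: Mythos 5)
Your proposal is correct and follows essentially the same route as the paper's proof: differentiate the quadratic form \eqref{fh_sp_2}, set the gradient to zero to obtain $(\Phi + \gamma\Omega)\tilde{\bm\theta}^S = \Phi\hat{\bm{\theta}}^B$, rearrange to the stated closed form, and invoke positive definiteness of the quadratic form for uniqueness. Your version merely spells out details the paper leaves implicit (the positive semidefiniteness of $\Omega$, the invertibility of $\Phi+\gamma\Omega$, and the left-factorization $\Phi + \gamma\Omega = \Phi(I_m + \gamma\Phi^{-1}\Omega)$), all of which are accurate.
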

\begin{proof}
  Differentiating \eqref{fh_sp_2} and setting the
  gradient to zero at $\tilde{\bm\theta}^S$ yields
$\Phi(\tilde{\bm\theta}^S - \hat{\bm{\theta}}^B) + \gamma \Omega \tilde{\bm\theta}^S = \bm0$. Then
\par\vspace*{-\parskip}\noindent
  \begin{align*}
 (\Phi + \gamma \Omega) \tilde{\bm\theta}^S = \Phi \hat{\bm{\theta}}^B
  &\implies \tilde{\bm\theta}^S
=(I_m + \gamma\Phi^{-1} \Omega)^{-1}\hat{\bm{\theta}}^B.
 \end{align*}
 Since
 \eqref{fh_sp_2} is a positive-definite quadratic form in $\bdd$, the solution is
 unique.
\end{proof}

See~\S\ref{sec:unit-level-smoothing} for an extension to unit-level models.

\begin{remark}
The parameter to be estimated for each area may be multivariate.  For instance,
we might seek both a poverty rate and a median income for each area.  For
simplicity, we assume that the parameter dimension $p$ is the same for each of
the $m$ areas.  Then Theorem \ref{thm:general-smoothing} can be applied with
$\bm{\theta} = (\theta_{11}, \ldots,
\theta_{m1},\ldots,\theta_{12},\ldots,\theta_{m2}, \ldots, \theta_{1p},\ldots, \theta_{mp})$.
The matrix~$\Phi$ remains the diagonal matrix of the $\phi_{ij}$, in the same order  as $\bm{\theta}$.  However, $\Omega$ is now a
block-diagonal matrix, where each $m\times m$ block contains a copy of the appropriate matrix for the corresponding univariate problem.  This ensures that the
same smoothness constraint is imposed on each component of the parameter
vectors, but different components are not smoothed together.
\end{remark}

\section{Benchmarking and Smoothing}
\label{sec:benchmarking}

We now turn to situations where our estimates should not just be smooth,
minimizing \eqref{fh_sp_2}, but also obey benchmarking
constraints.  As the benchmarking constraints are relaxed, we should recover
the results of \S \ref{sec:smooth}.  Our approach to finding benchmarked
Bayes estimators extends that of \cite{datta_2011,ghosh_2013}.  We employ the
following definition.

\begin{definition}[Benchmarking constraints, benchmarked Bayes estimator]
  \emph{Benchmarking constraints} are equality constraints on the weighted
  means or weighted variances of subsets (possibly all) of the estimates.  The
  \emph{benchmarked Bayes estimator} is the minimizer of the posterior risk
  subject to the benchmarking constraints.
\end{definition}

The levels to which we benchmark, i.e., the values of the equality constraints, are assumed to 
be given \emph{externally} from some other data
source.
\textcolor{black}{(For internal benchmarking, see \cite{bell_2013}.)
Our methods address linear, weighted mean constraints, as in
\cite{datta_2011, ghosh_2013}.}


\subsection{Linear Benchmarking Constraints}

We first consider benchmarking constraints which are linear in the estimate~$\bdd$, such as means or totals.  The general problem is now to minimize the posterior risk in~(\ref{fh_sp_2}) subject to the constraints
\begin{equation}
M\bdd=\bm t,\label{eqn:linear-constraints}
\end{equation}
where $\bm{t}$ is a given $k$-dimensional vector and $M$ is a $k\times m$
matrix.
As before, this is equivalent to
introducing a $k$-dimensional vector of Lagrange multipliers $\bm{\lambda}$ and minimizing
\[
(\bdd-\hat{\bm{\theta}}^B)^T \Phi (\bdd-\hat{\bm{\theta}}^B) + \gamma \bdd^T \Omega \bdd -2\bm{\lambda}^T(M\bdd - \bm{t}).
\]

\begin{theorem}
  Suppose
  (\ref{eqn:linear-constraints})
  has solutions.  Then the constrained Bayes estimator
  under~(\ref{eqn:linear-constraints}) is
  \small{
  \[
  \tilde{\bm\theta}^{BM} \!= \Sigma^{-1}\!\left[ \Phi \hat{\bm{\theta}}^B +
    M^T (M \Sigma^{-1} M^T)^{-1} \!\left ( \bm{t} - M \Sigma^{-1} \Phi
      \hat{\bm{\theta}}^B \right ) \right]\!,
  \]
  }
  where $\Sigma = \Phi + \gamma \Omega$.
  \label{thm:general-linear-benchmarked}
\end{theorem}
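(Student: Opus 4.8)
The plan is to treat this as an equality-constrained quadratic program and solve it by the method of Lagrange multipliers, exactly as the excerpt sets up. First I would write the stationarity condition for the Lagrangian $(\bdd-\thatb)^T \Phi (\bdd-\thatb) + \gamma \bdd^T \Omega \bdd - 2\bl^T(M\bdd - \bm{t})$. Differentiating with respect to $\bdd$ and setting the gradient to zero at the optimum $\tilde{\bm\theta}^{BM}$ gives the first-order condition $\Phi(\tilde{\bm\theta}^{BM} - \thatb) + \gamma\Omega\,\tilde{\bm\theta}^{BM} = M^T\bl$, which, writing $\Sigma = \Phi + \gamma\Omega$, rearranges to $\Sigma\,\tilde{\bm\theta}^{BM} = \Phi\thatb + M^T\bl$.

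Second, I would solve this for $\tilde{\bm\theta}^{BM}$ in terms of the still-unknown multipliers, obtaining $\tilde{\bm\theta}^{BM} = \Sigma^{-1}(\Phi\thatb + M^T\bl)$, and then pin down $\bl$ by enforcing the primal constraint $M\tilde{\bm\theta}^{BM} = \bm{t}$. Substituting the expression for $\tilde{\bm\theta}^{BM}$ into the constraint yields $M\Sigma^{-1}M^T\bl = \bm{t} - M\Sigma^{-1}\Phi\thatb$, so that $\bl = (M\Sigma^{-1}M^T)^{-1}(\bm{t} - M\Sigma^{-1}\Phi\thatb)$. Back-substituting this value of $\bl$ into the expression for $\tilde{\bm\theta}^{BM}$ reproduces the stated formula. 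This is essentially the same Karush--Kuhn--Tucker computation used in \cite{datta_2011, ghosh_2013}, adapted to carry the extra smoothing term $\gamma\Omega$ inside $\Sigma$.

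The main obstacle is justifying the two matrix inversions. For $\Sigma^{-1}$ I would reuse the positive-definiteness of $\Sigma = \Phi + \gamma\Omega$ already noted for Theorem~\ref{thm:general-smoothing}: $\Phi$ is positive definite by construction and $\Omega$ is positive semidefinite, since $\bdd^T\Omega\bdd = \sum_{i,\iprime}(\delta_i-\delta_\iprime)^2 q_{i\iprime}\ge 0$ with $q_{i\iprime}\ge 0$ (Lemma~\ref{lemma:squared-differences}), so their sum is invertible for every $\gamma\ge 0$. The inversion of $M\Sigma^{-1}M^T$ is more delicate: it requires $M$ to have full row rank $k$, equivalently that the $k$ benchmarking constraints be non-redundant. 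I would note that the hypothesis that \eqref{eqn:linear-constraints} has solutions guarantees consistency, and that under the standard additional assumption that $M$ has full row rank, $M\Sigma^{-1}M^T$ is positive definite (because $\Sigma^{-1}$ is) and hence invertible. Finally, because the objective \eqref{fh_sp_2} is a positive-definite quadratic form in $\bdd$ and the feasible set is an affine subspace, the stationary point found above is the unique global minimizer, so no second-order check beyond this convexity argument is required.
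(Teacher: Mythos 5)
Your proof is correct and follows essentially the same Lagrange-multiplier computation as the paper: differentiate the Lagrangian, solve the stationarity condition for $\tilde{\bm\theta}^{BM}$ in terms of $\bm\lambda$, then recover $\bm\lambda$ by substituting into the constraint $M\bdd=\bm t$. Your added justification of the two matrix inversions (positive definiteness of $\Sigma$, full row rank of $M$ for invertibility of $M\Sigma^{-1}M^T$) is more careful than the paper's argument, which omits these points and in fact contains a typo, writing $\bm\lambda=(M\Sigma^{-1}M^T)(\bm t-M\Sigma^{-1}\Phi\hat{\bm\theta}^B)$ where the inverse $(M\Sigma^{-1}M^T)^{-1}$ is intended.
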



\begin{remark}
Note that the Theorem~\ref{thm:general-linear-benchmarked} estimator~$\tilde{\bm\theta}^{BM}$ can be expressed in terms of the Theorem~\ref{thm:general-smoothing} estimator~$\tilde{\bm\theta}^S$ as
\[
\tilde{\bm\theta}^{BM} = \tilde{\bm\theta}^S+ \Sigma^{-1} M^T 
(M \Sigma^{-1} M^T)^{-1} (\bm t - M\tilde{\bm\theta}^S).
\]
Thus, it can be seen that the benchmarking essentially ``adjusts'' the estimator $\tilde{\bm\theta}^S$ based on the discrepancy between $M\tilde{\bm\theta}^S$ and the target~$\bm t$.
\end{remark}


\begin{proof}[Proof of Theorem~\ref{thm:general-linear-benchmarked}]
  Differentiating with respect to $\bdd$ and setting the result equal to zero at $\tilde{\bm\theta}^{BM}$ yields
 \begin{align*}
  M^T \bm{\lambda} &= \Phi
  (\tilde{\bm\theta}^{BM}-\hat{\bm{\theta}}^B) + \gamma \Omega \tilde{\bm\theta}^{BM} \\
  &\implies \tilde{\bm\theta}^{BM} = \Sigma^{-1} (\Phi \hat{\bm{\theta}}^B +
  M^T \bm{\lambda} ).
\end{align*}
  Then by the constraint,
  \begin{align}
  &\bm t=M\Sigma^{-1}(\Phi \hat{\bm{\theta}}^B+M^T\bm{\lambda})\\
  &=M\Sigma^{-1}\Phi\hat{\bm\theta}^B+M\Sigma^{-1}M^T\bm\lambda,\notag
  \end{align}
  so $\bm\lambda=(M\Sigma^{-1}M^T)(\bm t-M\Sigma^{-1}\Phi\hat{\bm\theta}^B)$.
  The result follows immediately.
\end{proof}
Often there is only one linear constraint of the form $\sum_i w_i\delta_i=t$, or equivalently $\bm w^T\bm\delta=t$, for some nonnegative weights $w_i$ and some $t\in\mathbb{R}$.  This is simply a special case of Theorem~\ref{thm:general-linear-benchmarked} with $k=1$ and $M=\bm w^T,$ in which case the result simplifies to $$\tilde{\bm\theta}^{BM} = \tilde{\bm\theta}^S + (t - \bm w^T\tilde{\bm\theta}^S)(\bm w^T \Sigma^{-1} \bm w)^{-1} \Sigma^{-1} \bm w. $$

Also see~\S\ref{sec:multiple-weighted-means} for an extension to unit-level models.

\subsection{Weighted Variability Constraints}

Now suppose that we impose one additional constraint on the weighted variability of the
form
\begin{equation}
(\bdd - \bm{\tau})^T W (\bdd - \bm{\tau}) = h 
\label{eqn:quadratic-constraints}
\end{equation}
where $\bm\tau$ is an $m$-dimensional vector,
$W$ is a symmetric $m\times m$ matrix, and $h$ is a non-negative constant.
In the important special case of a single weighted-mean constraint of the form $\bm w^T\bm\delta=t$, the vector~$\bm\tau$ and the matrix~$W$ in~(\ref{eqn:quadratic-constraints}) are typically taken as $\bm\tau=t\bm1_m$ and $W=\text{Diag}(w_i)$, where $\bm1_m$ denotes the ones vector of length~$m$.

The imposition of such a constraint immediately renders the associated optimization problem considerably more challenging.
Specifically, closed-form solutions for the minimizer can no longer be obtained in general.
Moreover, notice that the set of $\bm\delta\in\mathbb R^m$ that satisfy both (\ref{eqn:linear-constraints}) and~(\ref{eqn:quadratic-constraints}) is no longer convex.
Thus, even a numerical solution may be difficult to obtain.  The question of how to incorporate variability constraints while maintaining tractability of the model is a potential direction of future research and is beyond the scope of this paper.

\subsubsection{Geometric Interpretation}
\label{geometric}

Our formulation of benchmarking and smoothing as constrained optimization
problems has a very simple geometric interpretation.  It is well known that the
Bayes estimate is the minimizer of the conditional expectation of the mean
squared error (MSE). Since the minimization is taken over {\em all} possible
values of $\bm{\theta},$ the Bayes estimate will not respect any constraints we
might wish to impose (except by chance) or unless these constraints are included in
the specification of the prior.  We instead seek to minimize the MSE
within the feasible set of the constraints.  We find the point in the feasible
set which is as close (in the sense of expected weighted squared error) to the Bayes estimate as possible.  That is, we {\em
  project} the Bayes estimate into the feasible set.

The geometry of the feasible set is itself slightly complicated, because of the
constraints imposed.
Note that the smoothness penalty in the loss function may be reformulated as a smoothness constraint of the form $\bdd^T \Omega\bdd \leq s$ for some $s>0$.  This constraint
defines an ellipsoid centered at the origin.  Constraints on
weighted means define linear sub-spaces, e.g., planes, depending on
the number of constraints and the number of variables.  Finally, constraints of
weighted variabilities define the surfaces of cones.  The constrained Bayes
estimator is the projection of the unconstrained Bayes estimator onto the
intersection of the ellipsoid, the linear sub-space, and the cones.

\subsection{Choice of Smoothing Penalties}
\label{sec:xv}

 The choice of~$\gamma$\footnote{In unit-level problems, $\gamma_A$ and
  $\gamma_U$; we will not note all the small modifications needed to pick two
  smoothing factors at once.}  is assumed fixed {\em a priori}.  
But knowing $\gamma$ is equivalent to knowing how smooth the estimate \emph{ought}
to be, and this knowledge is lacking in most applications.  In such situations,
we suggest obtaining $\gamma$ by leave-one-out cross-validation
\citep{Stone1974,Wahba-spline-models,Corona-et-al-laplacian-rkhs}.

For each value of $\gamma$ and each area $i$, define $\bm\delta^{(-i)}(\gamma)$ as
the solution of the corresponding optimization problem with the loss-function
term for~$i$ dropped\footnote{Instead of the sum of squared errors
  $\sum_{i'=1}^{m}{\phi_{i'} (\delta_{i'}-\theta_{i'})^2}$, we use $\sum_{i'\neq i}{\phi_{i'}
    (\delta_{i'}-\theta_{i'})^2}$.  This amounts to replacing $\Phi$ with a matrix
  whose $i$th row and column are both 0.}.  The smoothness penalty
and any applicable benchmarking constraints are however calculated over the
{\em whole} of the vector $\bdd$, not just the non-$i$ entries.  (This ensures
that $\bm\delta^{(-i)}(\gamma)$ does meet all the constraints, while still making
a {\em prediction} about $\theta_i$.)   The cross-validation score of
$\gamma$ is
\[
V(\gamma) = \frac{1}{m}\sum_{i=1}^{m}\left[\delta^{(-i)}_i(\gamma) - \thatb_i\right]^2 \phi_i ,
\]
where $\delta^{(-i)}_i(\gamma)$ denotes the $i$th component of $\bm\delta^{(-i)}(\gamma)$,
and the minimizer of the cross-validation scores is
$
\hat{\gamma} = \argmin_{\gamma\ge0}{V(\gamma)}.
$

Direct evaluation of $V(\gamma)$ can be computationally costly. See
 \cite{Wahba-spline-models} for faster approximations, such
as ``generalized cross-validation.''

\section{Evaluation Using a Residual Bootstrap}
\label{sec:bootstrap}

It is traditional in small area estimation to report approximations to the
overall estimation error.  (One of the main motivations of using small area
methods is, after all, reducing the estimation error.)  This is generally a
challenging undertaking, since while methods like cross-validation can be used
to evaluate \emph{prediction} error in a way which is comparable across models,
they do not work for \emph{estimation} error.  Thus, one needs to use more
strictly model-based approaches, either analytic or based on the
bootstrap.

Evaluating the MSE of our estimates is especially difficult, since we combine
a model-based estimate with a non-parametric smoothing term.  A straightforward
model-based bootstrap would sample from the posterior distribution of
\eqref{eqn:tlnise} to generate a new set of true poverty rates $\bm\theta^*$ and
observations $\bm y^*$, re-run the estimation on $\bm y^*$, and see how close the
resulting estimates $\bm\delta^*$ came to $\bm\theta^*$.  However, this presumes the
correctness of the Fay-Harriot model in \eqref{eqn:tlnise}, which is precisely
what we have chosen \emph{not} to assume through our imposition of the benchmarking/smoothing constraints\footnote{If we follow this procedure nonetheless, we always conclude
  that benchmarking and especially smoothing radically increase the MSE by
  introducing large biases.}.  Note that such constraints do not
 fit naturally into the generative model.

We evade this dilemma by using a semi-parametric residual bootstrap, a common
approach when the functional form of a regression is known fairly
securely, but the distribution of fluctuations is not. We calculate
the differences between $y_i$ and our constrained Bayes estimates
$\tilde{\theta}^{BM}_i$, resample these residuals, scale them to account for
heteroskedasticity, and add them back to $\tilde{\theta}^{BM}_i$ to generate
$y^*_i$.  (See Appendix \ref{app:bootstrap}.)  The residual
bootstrap assumes that smoothing is appropriate and that we
have chosen the right $\Omega$ matrix.

\section{Application to the SAIPE Dataset}
\label{app:saipe}

We apply our constrained Bayes estimation procedure to data from the Small Area
Income and Poverty Estimates (SAIPE) program of the U.S.\ Census Bureau.  Our
goal is to estimate, for 1998, the rate of poverty of children aged 5--17
years in each state and the District of Columbia.  This is an area-level
model, with states as the areas. The small area model from which we derive our
initial Bayes estimates is described in \S \ref{sec:fay-harriot-for-saipe}.
The primary benchmarking constraint is that the weighted mean of the state
poverty-rate estimates must match the national poverty rate established by
direct estimates.  A secondary benchmarking constraint is the matching of the
similarly-known national variance in poverty rates.  This benchmarking had
already been considered, for this data and model, in \cite{datta_2011}.  We add
to this the constraint of smoothness across states, where our choice of
Laplacian and of smoothing penalty is discussed in \S
\ref{sec:smoothing-the-application}. 


SAIPE estimates average household income and poverty rates from
small areas in the U.S. over multiple years.  It works by combining direct
estimates of these quantities, from the Annual Social and Economic (ASEC)
Supplement of the Current Population Survey (CPS) and the American Community
Survey (ACS), with standard small area models, which use as predictors several
variables drawn from administrative records.  This presumes that areas with
similar values of the predictor variables should have similar values of the
parameters of interest.  See \cite{bell_2013, datta_2011} for a fuller account
of the SAIPE program.

For illustrative purposes, and following \cite{datta_2011}, we have focused on
estimating the rate of poverty among children aged 5--17 in 1998.  The
public-use data here is at the state level, so states are areas.  The
predictor variables used are: a pseudo-estimate of the child poverty
rate based on Internal Revenue Service (IRS) income-tax statements; the rate of
households not filing taxes with the IRS\footnote{In the U.S., households whose
  income falls below certain thresholds are not required to file federal
  taxes.}; the rate of food stamp\footnote{A program providing direct
  assistance in buying food and other necessities for low-income households.}
use; and the residual term from a regression of the 1990 Census estimated child
poverty rate.  The last variable is supposed to help represent whether a state
has an unusual level of poverty, given its other characteristics, which is
presumably persistent over time.

\subsection{Hierarchical Bayesian Model for SAIPE}
\label{sec:fay-harriot-for-saipe}

As in \cite{datta_2011}, our initial, unconstrained Bayes estimates for
poverty rates are derived from the following hierarchical model of \cite{fay_1979}:
\par\noindent
\begin{align}
  \label{eqn:tlnise}
  y_i \mid \theta_i &\sim \mathcal{N}(\theta_i, D_i); \; i=1,\ldots, m \\
  \theta_i \mid \bm\beta &\sim \mathcal{N}(\bm x_i^T \bm\beta, \sigma_u^2) \notag \\
  \pi(\sigma_u^2, \bm\beta) &\propto 1. \notag
\end{align}
Here $\theta_i$ is the true poverty rate for state $i$, $y_i$ is the direct
survey estimate, $D_i$ is the known sampling variance of the survey, $\bm x_i$ are
the predictors, $\bb$ is the vector of regression coefficients, and
$\sigma_u^2$ is the unknown modeling variance.  The posterior means and
variances of $(\bb,\theta_i,\sigma_u^2)$ are estimated by Gibbs sampling.

\subsection{Benchmarking and Smoothing Results}
\label{sec:smoothing-the-application}

\textcolor{black}{
We consider three different possibilities for benchmarking and/or smoothing: (i)~benchmarking the mean alone without smoothing, (ii)~benchmarking both the mean and variability without smoothing, and (iii)~benchmarking the mean alone with smoothing.
Note that since there is no smoothing in~(ii), solutions can indeed be found in closed form; see~\cite{datta_2011} for details.}

In each case, the benchmarking weights~$w_i$
are proportional to the estimated population of children aged 5--17 in each
state. Intuitive remarks regarding how to choose $h$ for benchmarking the weighted variability
are given in \cite{ghosh_2013, datta_2011}. We refer to these for technical details.
Figure~\ref{fig:benchmarking_man_vs_mean_and_variab} compares the unconstrained Bayes
estimator to the benchmarked Bayes estimators.  Poverty estimates change very
little when benchmarking the weighted mean alone, and only a little more when
we benchmark both mean and variability.

\begin{figure}
  \begin{center}
    \includegraphics[width=0.65\textwidth]{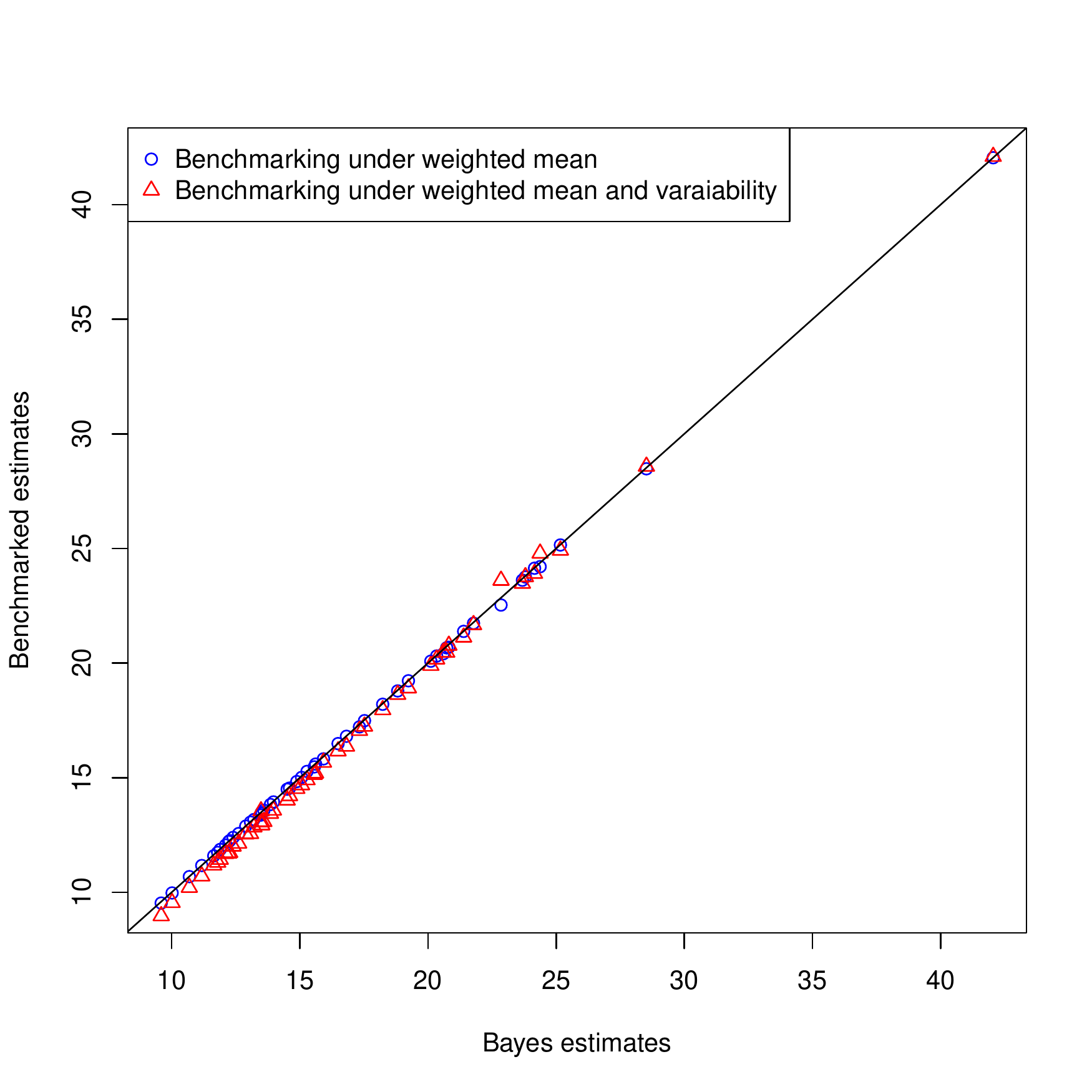}
    \caption{Benchmarking the mean alone leads to little change from the Bayes
      estimates; benchmarking both the mean and variability has very little improvement.
      }
    \label{fig:benchmarking_man_vs_mean_and_variab}
  \end{center}
\end{figure}


The most important part of our procedure is picking the matrix $\Omega$ used to
measure the smoothness of estimates---equivalently, picking the matrix $Q$
which says how similar the estimates for any two domains should be.  This is
inevitably application-specific.  In the results reported below, we used the
simple choice where $q_{i\iprime} = 1$ if the states $i$ and $\iprime$ shared a
border, and $0$ otherwise.  This treats the states as nodes in an unweighted
graph, with $Q$ being its adjacency matrix and $\Omega$ its Laplacian.  
As described
in \S \ref{sec:xv}, the smoothing factor $\gamma$ was picked by leave-one-out
cross-validation; the final value was $\gamma\approx 0.02$.  Figure~\ref{fig:adjustment} shows the smoothed and mean-benchmarked Bayes estimates
versus the unconstrained Bayes estimates.  In general, low Bayes estimates are
pulled up and high ones are brought down; everywhere, estimates are adjusted
towards their neighbors.

\begin{figure}
  \begin{center}
    \includegraphics[width=\textwidth]{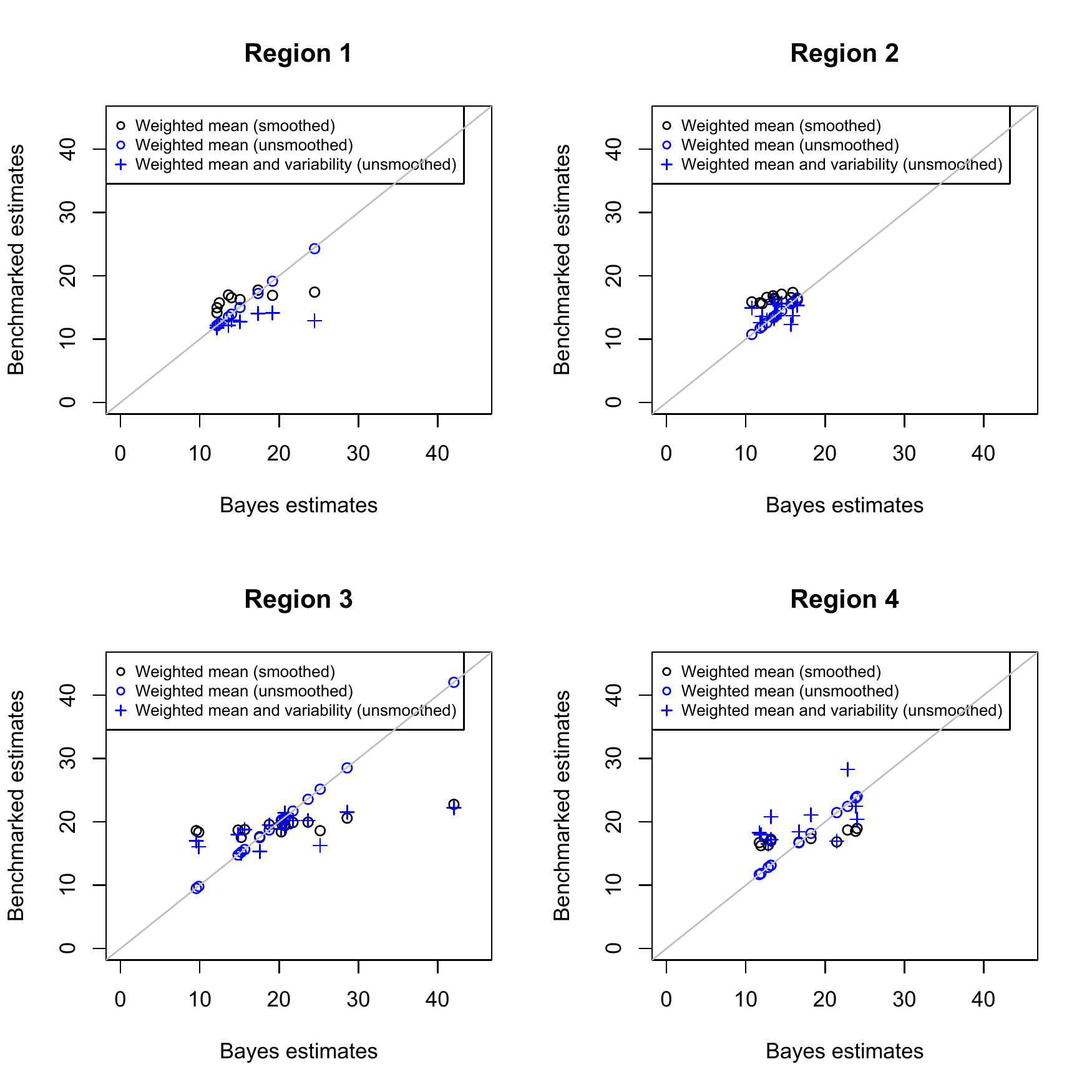}
    \caption{Benchmarked estimates with and without region, plotted against
the Bayes estimates, by region.}
    \label{fig:bench_smooth}
  \end{center}
\end{figure}

\begin{figure}
  \centering
  \includegraphics[width=0.8\textwidth]{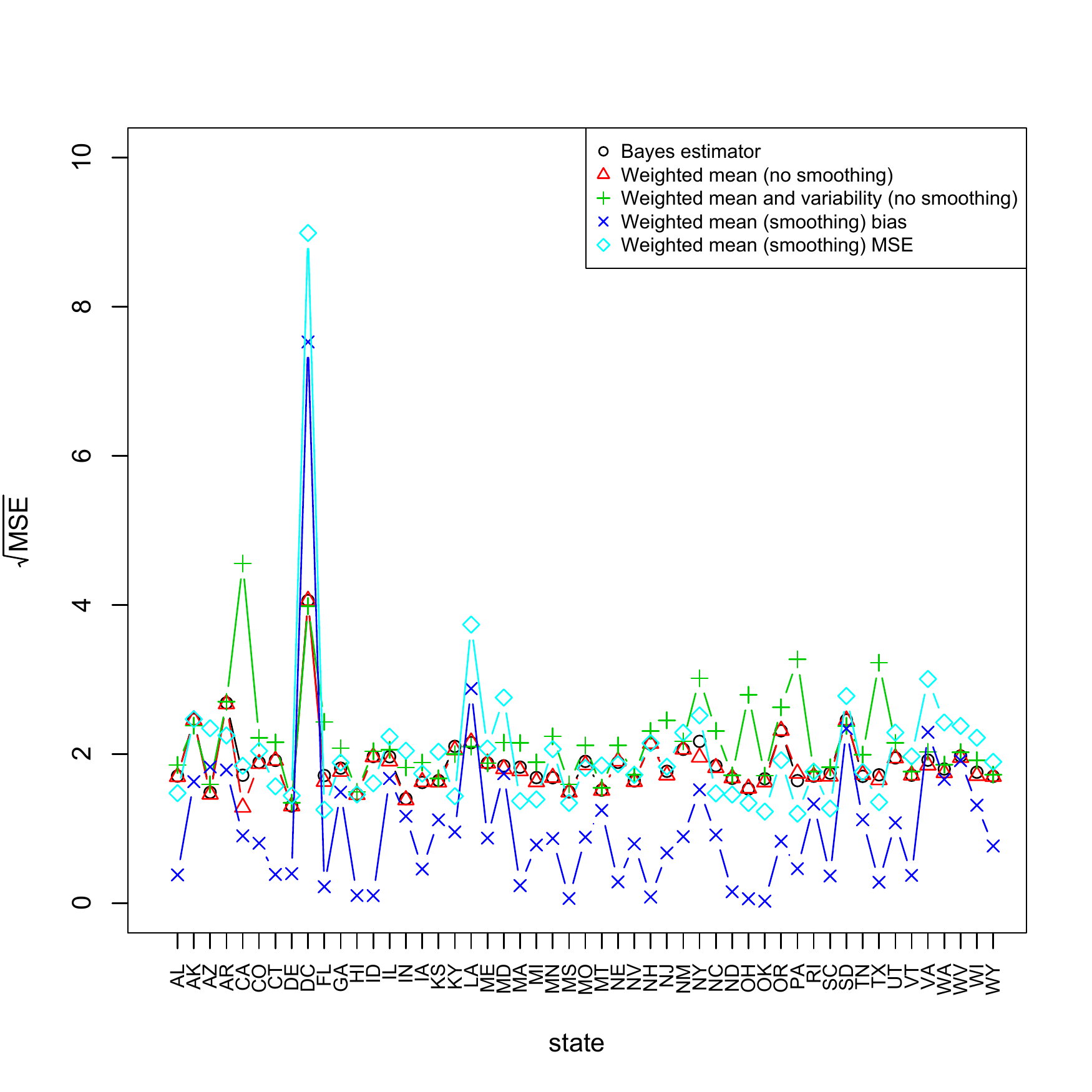}
  \includegraphics[width=0.8\textwidth]{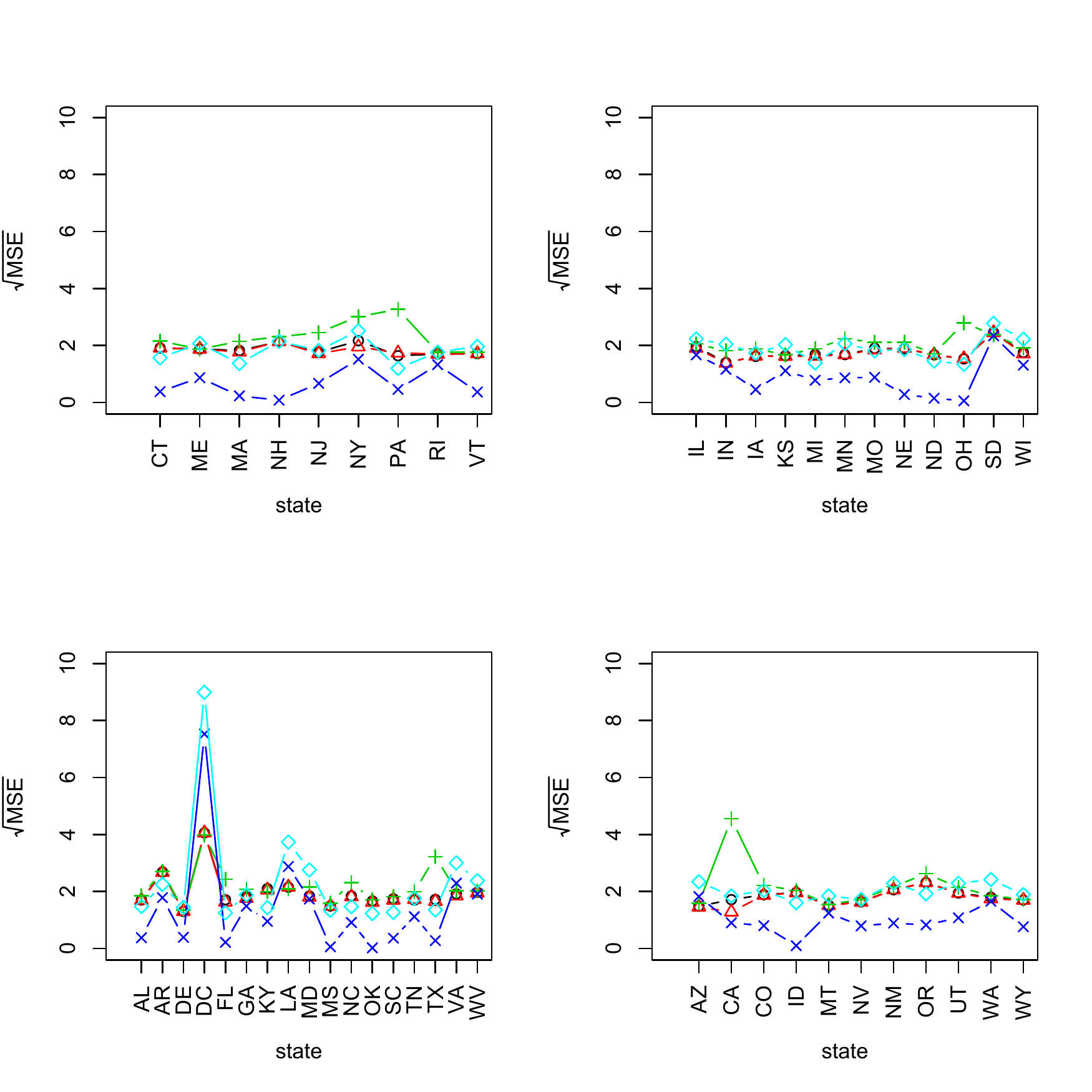}
  \caption{Above: Bootstrap MSEs for the SAIPE data and the Fay-Harriot model,
    under different combinations of benchmarking and smoothing.  Below: the
    same data, but broken into the geographic regions.}
  \label{fig:main-bootstrap-mses_new}
\end{figure}

\begin{figure}
  \begin{center}
    \includegraphics[width=0.6\textwidth]{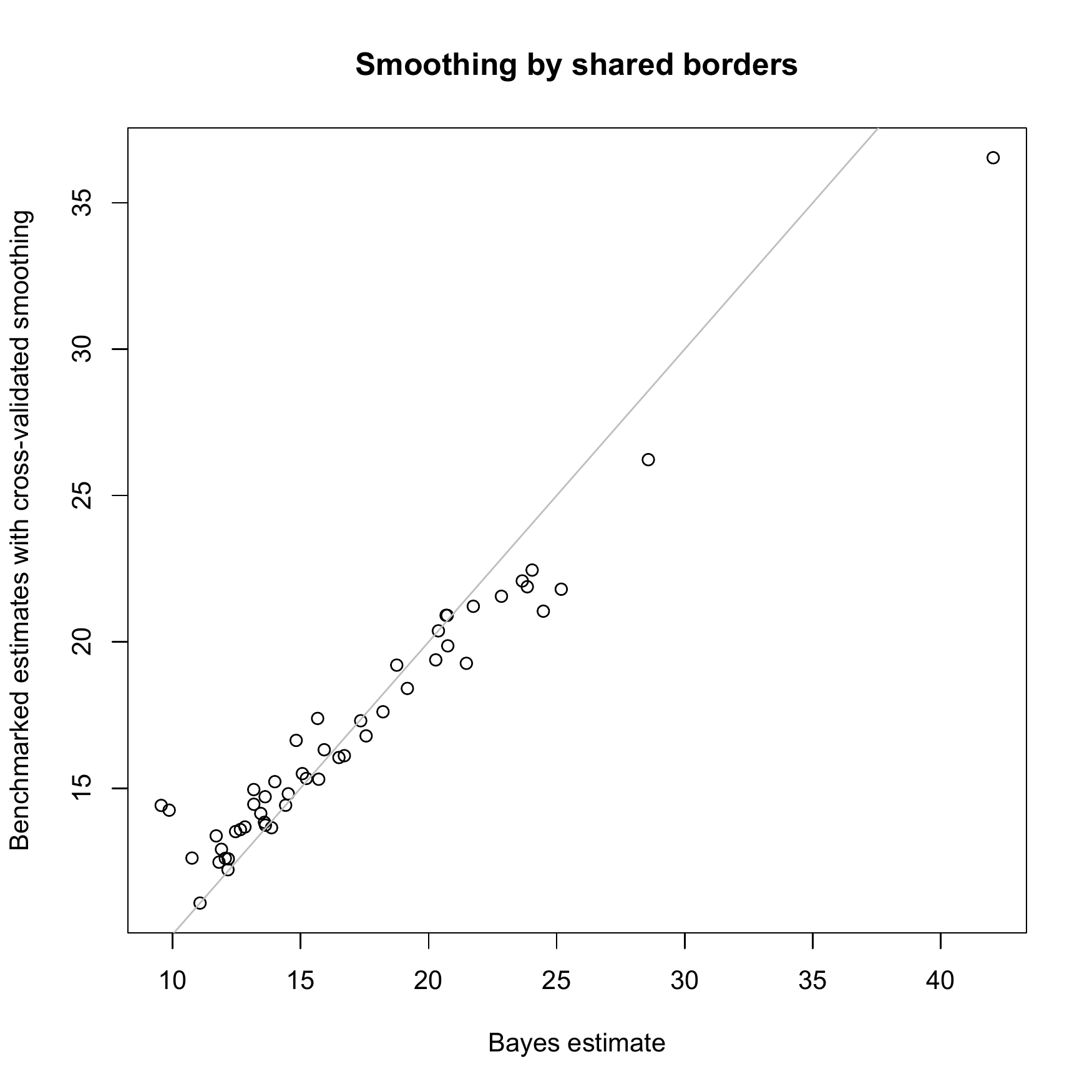}
    \caption{Smoothed, mean-constrained Bayes estimates versus unconstrained
      Bayes estimates.}
    \label{fig:adjustment}
  \end{center}
\end{figure}


Figure \ref{fig:bench_smooth} further illustrates the effects of combining
smoothing with benchmarking a weighted mean.  It is broken up by the four U.S.
Census regions\footnote{Region 1, the Northeast: Connecticut, Maine,
  Massachusetts, New Hampshire, New Jersey, New York, Pennsylvania, Rhode
  Island, Vermont.  Region 2, the Midwest: Illinois, Indiana, Iowa, Kansas,
  Michigan, Minnesota, Missouri, Nebraska, North Dakota, Ohio, South Dakota,
  Wisconsin.  Region 3, the South: Alabama, Arkansas, Delaware, the District of
  Columbia, Florida, Georgia, Kentucky, Louisiana, Maryland, Minnesota, North
  Carolina, Oklahoma, South Carolina, Tennessee, Texas, Virginia, West
  Virginia.  Region 4, the West: Arizona, California, Colorado, Idaho, Montana,
  Nevada, New Mexico, Oregon, Utah, Washington, Wyoming.  Note that Alaska and
  Hawaii are not included in this regional scheme.} for ease of visualization.
While benchmarking alone has relatively little impact on the Bayes estimates,
benchmarking plus smoothing does.  In each region, the smoothed estimates fall
on lines of slope less than 1, indicating shrinkage towards a common value,
even though the regions are not part of the smoothing scheme.  This means that
the value toward which the estimates are shrunk is not necessarily the regional mean---observe region~2, where most constrained estimates \emph{exceed} the Bayes
estimates.

Figure \ref{fig:main-bootstrap-mses_new} shows the statewise MSEs under the
bootstrap of \S\ref{sec:bootstrap} for different combinations of benchmarking and smoothing.  Smoothing
tends to bring down the bias and the MSE for most but not all states---it is
in fact known that the bias cannot be reduced uniformly across areas
\citep{ghosh_2013, pfeffermann_2013}.  The one ``state'' for which smoothing
drastically increases the estimated error is the District of Columbia, which is
unsurprising on substantive grounds.\footnote{Briefly, the District of Columbia is not a separate
  state, but rather a small part of a larger metropolitan area, containing a
  disproportionate share of the metropolis's poorest neighborhoods.  The
  adjoining states of Maryland and Virginia are much larger, much more
  prosperous, and much more heterogeneous.  Many of these issues would be
  alleviated if we had data on finer spatial scales.}


We considered several alternative ways of smoothing the Bayes estimates.  One
was to make $q_{i\iprime}$ decrease with the geographic distance betetween states, regarded
as points at either their centers or their capitals.  However, neither choice
of representative point was compelling, and we would also have to pick the
exact rate at which $q_{i\iprime}$ decreased with distance.  A second approach was to
treat the Census regions as clusters, setting $q_{i\iprime}=1$ within a cluster
and $q_{i\iprime}=0$ between them.  This however performed poorly under cross-validation.
Similarly, we attempted diffusion-map $k$-means\footnote{The covariates 
  were the state mean income, the state median income, the fraction of adults
  with at least a high-school education, the percentage of the population
  racially classified as white, and the percentage living in metropolitan
  areas.  
}
\citep{Lee-Wasserman-spectral-connectivity,Joey-Richards-diffusionMap} with
varying $k$, but none worked well under cross-validation.

We suspect that both distance-based and clustering approaches may work better
at finer levels of spatial resolution, e.g., moving from whole states to
counties or even census tracts, which are more demographically homogeneous.
However, this is speculative without such fine-grained data.


\section{Discussion}
We have provided a general approach to SAE at both the unit and area levels, where we
smooth and benchmark estimates. Our approach
\textcolor{black}{yields closed-form solutions without requiring}
any distributional assumptions. Furthermore, 
our results apply for linear and non-linear estimators and extend to multivariate settings.
Finally, we show through a bootstrap approximation and cross-validation that smoothing does improve estimation of poverty rates at the state level for the SAIPE dataset for most states as measured by MSE. 

\textcolor{black}{Note that we do not provide a simulation study, since any one that we considered was an unfair and biased comparison to either our proposed estimator or those proposed earlier in the literature. This is due to the fact that the Fay-Herriot model does not assume a smoothing/spatial component, however, our loss function does. We would be glad to consider any fair simulation study if one can be pointed out that would lead to helpful and meaningful results.}

\textcolor{black}{Another direction for future research is the extension of the present work to address weighted variability constraints as well.}
This becomes a difficult non-convex optimization problem,
\textcolor{black}{for which it is not clear how to efficiently and reliabiliy obtain a numerical solution}.
\textcolor{black}{The imposition of more than one weighted variability constraint specifies the feasible set as the intersection of multiple $(m-1)$-dimensional manifolds in $m$-dimensional Euclidean space.  Careful consideration of the geometry of the resulting optimization problem may yield insight into methods of obtaining exact or approximate solutions, at least in certain special cases.  Such ideas are clearly a potential direction for future work.}


Throughout, we have worked with squared error.  However, it should be possible
to replace this with any other convex norm, with minimal changes to our
approach.  Once the Bayes estimate is obtained, the constrained Bayes estimate
would be found by projection onto the feasible set.  This would presumably
mean more numerical optimization and fewer closed forms, but the optimization
would remain convex and tractable.  Getting the initial Bayes estimates under a
different loss function might be more challenging.

It may be possible to go beyond point estimates to distributional
estimates.  Given a sample from the posterior distribution (e.g., from MCMC),
it is possible to project each sample point into the feasible
set, giving a posterior distribution whose support respects the constraints.
The inferential validity of this sample would however require careful
investigation.\footnote{Note that this is rather different from the idea in the
  recent papers of \cite{zhu_2012} of regularizing the posterior
  distribution, where the constraints or penalties are expressed as functionals
  of the whole posterior distribution.}

\small{
\section*{Acknowledgements}  
  RCS was supported by NSF grants SES1130706 and DMS1043903 and NIH grant \#1 U24 GM110707-01.}
  
  \normalsize


  

\newpage

\bibliography{chomp,locusts}
\bibliographystyle{asa}

\clearpage
\newpage
\appendix

\section{Results for Unit-Level Models}

Many problems feature multiple levels of aggregation.
For simplicity, we consider the specific case of two levels (from which the extension to three or more levels will be fairly clear).
``Areas'' refer to the upper level of aggregation and are divided into \textbf{units}.  The $i$th area contains $n_i$ units; the total number of
units is $N = \sum_{i}{n_i}$.  Units are strictly nested within areas and are indexed by $j$.  We denote the area-level
quantities as $\theta^A_i$ (with covariates~$\bm x^A_i$, etc.), and the unit-level
parameters as $\theta^U_{ij}$ (with covariates~$\bm x^U_{ij}$, etc.).  Denote the
vectors of Bayes estimates by $\hat{\bm{\theta}}_A^B$ and
$\hat{\bm{\theta}}_U^B.$ The loss weight for unit $j$ in area $i$ is
$\xi_{ij}.$ Assume that loss is additive across areas and units; thus, the
total loss from the action (estimate) $(\bdd^A, \bdd^U)$ is
\[
\sum_i{\phi_i (\delta^A_i - \theta^A_i)^2} + \sum_{ij}{\xi_{ij} (\delta^U_{ij} - \theta_{ij}^U )^2}.
\]
Define $\Xi$ as the diagonal matrix of the $\xi_{ij}$, which is
positive-definite. 

In many important cases, the area-level parameters are functions (e.g.,
weighted means or proportions) of the parameters for the units contained within
the area (e.g., we might use $\overline{\theta}_{iw} =
\sum_{j}{w_{ij}\theta^U_{ij}}$ as our $\theta^A_i$).  Less trivial examples are
quantiles or Gini coefficients of the $\theta^U_{ij}$ for each area.  However,
it does not make sense for the unit-level parameters to be functions of the
area-level parameters.  The area-level parameter does not {\em have} to be a
function of the unit-level parameters (e.g., if we have random effects for both
areas and for units, the latter do not determine the former).

The general results of Theorems~\ref{area}~and~\ref{thm:general-linear-benchmarked}
can also be applied to models at the unit level, as described below.

\subsection{Smoothing for Unit-Level Models}
\label{sec:unit-level-smoothing}
\label{sec:partitioning}

Consider the case where each area is partitioned into units, and estimates
are sought at both the unit and the area level.  (See \S \ref{sec:notation} for
notation.)  We need two similarity functions, $q_A$ as before, and $q_U$, where
$q_U(x_{ij}, x_{\iprime\jprime})$ is the similarity between unit $j$ in
area~$i$ and unit $\jprime$ in area $\iprime$.  The smoothness-augmented loss
function is
\par\vspace*{-\parskip}\noindent
\begin{align}
&L(\bm{\theta}^A,\bm{\theta}^U, \bdd^A, \bdd^U) \notag \\
  &=  \sum_i{\phi_i (\delta_i^A - \theta^A_i)^2} + \sum_{ij}{\xi_{ij} (\delta_{ij}^U - \theta_{ij}^U )^2}  
 \notag \\
 & \quad+ \gamma_A \sum_{i, \iprime}{(\delta^A_{i} -\delta^A_{\iprime})^2 q^A_{i\iprime}} + \gamma_U\!\!
\sum_{i j, \iprime \jprime}{(\delta_{ij}^U  -\delta_{\iprime \jprime}^U)^2 q_{ij, \iprime\jprime}^U} \notag\\
 &=  (\bdd_A-\bm{\theta}_A)^T\Phi(\bdd_A-\bm{\theta}_A) + (\bdd_U-\bm{\theta}_U)^T\Xi(\bdd_U-\bm{\theta}_U) \notag\\
 & \qquad + \gamma_A \bdd_A \Omega_A \bdd_A + \gamma_U\bdd_U \Omega_U \bdd_U, 
\label{eqn:unit-level-loss-vectorized}
\end{align}
defining $\Omega_A$ and $\Omega_U$ via Lemma \ref{lemma:squared-differences}.

\begin{corollary}
  \label{unit}
  The posterior risk of the loss \eqref{eqn:unit-level-loss-vectorized} is
  minimized by the estimators $\tilde{\bm\theta}^S_A = (I_m + \gamma_A\Phi^{-1}
  \Omega_A)^{-1} \hat{\bm{\theta}}_A^B$ and $\tilde{\bm\theta}^S_U = (I_N +
  \gamma_U\Xi^{-1} \Omega_U)^{-1} \hat{\bm{\theta}}_U^B.$
  \label{thm:unit-level-smoothing}
\end{corollary}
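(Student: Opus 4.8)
The plan is to exploit the fact that the posterior-expected loss in~\eqref{eqn:unit-level-loss-vectorized} is \emph{additively separable} into an area-level piece and a unit-level piece, so that the corollary follows by applying Theorem~\ref{thm:general-smoothing} twice with appropriate substitutions.

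First I would take the posterior expectation of~\eqref{eqn:unit-level-loss-vectorized}. Exactly as in the passage from~\eqref{fh_sp} to~\eqref{fh_sp_2}, the two data-fidelity terms become $(\bdd_A - \hat{\bm{\theta}}_A^B)^T\Phi(\bdd_A - \hat{\bm{\theta}}_A^B)$ and $(\bdd_U - \hat{\bm{\theta}}_U^B)^T\Xi(\bdd_U - \hat{\bm{\theta}}_U^B)$, up to additive constants free of $(\bdd_A,\bdd_U)$, where $\hat{\bm{\theta}}_A^B = E[\bm\theta_A \mid \bm y]$ and $\hat{\bm{\theta}}_U^B = E[\bm\theta_U \mid \bm y]$. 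The two smoothness penalties $\gamma_A\,\bdd_A^T\Omega_A\bdd_A$ and $\gamma_U\,\bdd_U^T\Omega_U\bdd_U$ do not involve $\bm\theta$ and so pass through the expectation unchanged.

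The key observation is then that the resulting objective has the form $f(\bdd_A) + g(\bdd_U)$ with \emph{no} cross-term coupling $\bdd_A$ and $\bdd_U$. In particular, any posterior correlation between $\bm\theta_A$ and $\bm\theta_U$ is irrelevant: these vectors never appear multiplied together in the loss, so their joint second moments contribute only to the discarded additive constant. Consequently the joint minimization separates, and $(\tilde{\bm\theta}^S_A,\tilde{\bm\theta}^S_U)$ minimizes the posterior risk if and only if $\tilde{\bm\theta}^S_A$ minimizes $f$ and $\tilde{\bm\theta}^S_U$ minimizes $g$ separately.

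Finally I would recognize each piece as an instance of~\eqref{fh_sp_2} and invoke Theorem~\ref{thm:general-smoothing}. The area-level piece $f$ matches the theorem verbatim with penalty factor $\gamma_A$ and matrix $\Omega_A$, giving $\tilde{\bm\theta}^S_A = (I_m + \gamma_A\Phi^{-1}\Omega_A)^{-1}\hat{\bm{\theta}}_A^B$; the unit-level piece $g$ is the same form under the substitutions $\Phi\mapsto\Xi$, $\gamma\mapsto\gamma_U$, $\Omega\mapsto\Omega_U$, and dimension $m\mapsto N$, yielding $\tilde{\bm\theta}^S_U = (I_N + \gamma_U\Xi^{-1}\Omega_U)^{-1}\hat{\bm{\theta}}_U^B$. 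Since $\Phi$ and $\Xi$ are positive definite and $\Omega_A,\Omega_U$ are positive semidefinite, both quadratic forms are positive definite and each minimizer is unique. There is no genuine obstacle beyond confirming this decoupling; the only point requiring care is to verify that no coupling term has been dropped in~\eqref{eqn:unit-level-loss-vectorized}, since the whole argument rests on the objective being additively separable in $\bdd_A$ and $\bdd_U$.
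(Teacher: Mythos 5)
Your proof is correct, and it is essentially the transpose of the paper's argument rather than a fundamentally different one: both hinge on the absence of any cross-term coupling $\bdd_A$ and $\bdd_U$ in the objective. The organizational difference is that the paper stacks $(\bm\theta^A,\bm\theta^U)$ into a single $(m+N)$-dimensional vector, sets $\Phi$ and $\Omega$ to be block-diagonal with the unit-level block of $\Omega$ rescaled by $\gamma_U/\gamma_A$ so that the single penalty factor of Theorem~\ref{thm:general-smoothing} reproduces both penalties at once, and then applies the theorem a single time, reading off the two formulas from the block-diagonal inverse. You instead observe the additive separability of the posterior risk directly and apply Theorem~\ref{thm:general-smoothing} twice, once to each block with the substitutions $\Phi\mapsto\Xi$, $\gamma\mapsto\gamma_U$, $\Omega\mapsto\Omega_U$, $m\mapsto N$. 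Your route buys a little robustness and clarity: it avoids the $\gamma_U/\gamma_A$ rescaling device, which degenerates when $\gamma_A=0$, and it makes explicit the decoupling that the paper only records in the remark following the corollary. Your points that the smoothness penalties pass through the posterior expectation unchanged, that posterior cross-moments of $\bm\theta^A$ and $\bm\theta^U$ only affect the discarded additive constant, and that uniqueness follows from positive definiteness of $\Phi$ and $\Xi$ together with positive semidefiniteness of $\Omega_A$ and $\Omega_U$, are exactly what is needed; there is no gap.
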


\begin{proof}
  First, note that the ``$m$'' of Theorem~\ref{thm:general-smoothing} is in fact $m+N$ in this setting.
  Now partition $\bm{\theta} =
  (\bm{\theta}^A,\bm{\theta}^U)$.  Similarly, partition the estimate vector as
  $\tilde{\bm\theta}^S = (\tilde{\bm\theta}^S_A, \tilde{\bm\theta}^S_U)$.  Set both
the $\Phi$ and $\Omega$ matrices to be block-diagonal:
\par\vspace*{-\parskip}\noindent
 $$\Phi = \begin{bmatrix} \Phi & 0 \\ 0 & \Xi\end{bmatrix},\qquad
 \Omega =
 \begin{bmatrix}  \Omega_A & 0 \\
   0 &\frac{\gamma_U}{\gamma_A}\Omega_U \end{bmatrix}.$$  Now apply Theorem
 \ref{thm:general-smoothing}.
\end{proof}

\begin{remark}
This device of partitioning was employed by \cite{datta_1991}.
It can be combined with multivariate parameters\footnote{As
before,
group the parameters in
  $\theta^A$ and $\theta^U$ by component.  Then $\Phi$ and $\Xi$ are diagonal;
  $\Omega_A$ and $\Omega_U$ are block-diagonal, each block a copy of the
  univariate $\Omega_A$ or $\Omega_U$.}, and indeed with more than two levels
of spatial hierarchy, if needed.  Since $\Phi$ and $\Omega$ are
block-diagonal, the optimizations over $\tilde{\bm\theta}^S_A$ and
$\tilde{\bm\theta}^S_U$ can be done separately, but no separate theorem is
required.
\end{remark}

\subsection{Benchmarking for Unit-Level Models}

Unit-level models can be benchmarked either for weighted means or for both weighted means and weighted variability.

\subsubsection{Weighted Mean}
\label{sec:multiple-weighted-means}

Consider a unit-level model in which we wish to benchmark both the weighted mean of the area-level estimates and the weighted means of the unit-level estimates within each area.  Then we wish to
minimize~(\ref{eqn:unit-level-loss-vectorized}) under the constraints
\par\vspace*{-\parskip}\noindent
\begin{align}
\sum_{i}{\eta_i \delta_i}  =  t^A, \qquad
\sum_{j}{w_{ij} \hat{\theta}^U_{ij}}  =  \delta_i \quad \forall\; i.
\label{eqn:unit-level-weighted-mean-constraints}
\end{align}
Partition $\bm{\theta}$, $\hat{\bm{\theta}}^B$, and $\bdd$ as in \S
\ref{sec:partitioning}.  Define $\tilde{W}$ as the $m \times N$ matrix such
that\footnote{The $i^{\mathrm{th}}$ row of $\tilde{W}$ will have non-zero
  entries $w_{ij}$ in the columns corresponding to the units in area $i$, and
  zeroes everywhere else.} $(\tilde{W} \bm{\theta}^U)_i =
\sum_{j}{w_{ij}\theta^U_{ij}}$, and define
\par\vspace*{-\parskip}\noindent
$$M = \begin{bmatrix}\bm\eta^T & \bm{0}_N \\
    -I_m & \tilde{W} \end{bmatrix},$$
where $I_m$ is the $m\times m$ identity
matrix and $\bm{0}_{N}$ is the length-$N$ vector of zeroes.  Let $\bm{t} = (t^A,
\bm{0}_m),$ where again $\bm{0}_m$ is the length-$m$ vector of zeroes.  Then
\eqref{eqn:unit-level-weighted-mean-constraints} amounts to
 $M \bdd = \bm{t}$.  By a direct application of Theorem \ref{thm:general-linear-benchmarked}, we have the following result.

\begin{corollary}
The benchmarked Bayes estimator that minimizes the posterior risk in~(\ref{eqn:unit-level-loss-vectorized}) under the constraints in~(\ref{eqn:unit-level-weighted-mean-constraints}) is
  \[
  \tilde{\bm\theta}^S \!=\! \Sigma^{-1} \!\left[ \Phi \hat{\bm{\theta}}^B \!+\!
    M^T \!(M \Sigma^{-1} M^T)^{-1} \!\!\left ( \bm{t} - M \Sigma^{-1} \Phi
      \hat{\bm{\theta}}^B \right ) \!\right]\!,
  \]
  where $\Sigma = \Phi + \gamma \Omega$, and where $\Phi$ and $\Omega$ are as in the
  proof of Corollary \ref{thm:unit-level-smoothing}.
\end{corollary}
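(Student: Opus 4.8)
The plan is to realize this corollary as a direct specialization of Theorem~\ref{thm:general-linear-benchmarked}, taking the ambient dimension to be $m+N$ rather than $m$, exactly in the spirit of the proof of Corollary~\ref{thm:unit-level-smoothing}. First I would stack the area- and unit-level quantities into single vectors $\bm{\theta} = (\bm{\theta}^A, \bm{\theta}^U)$, $\hat{\bm{\theta}}^B = (\hat{\bm{\theta}}^B_A, \hat{\bm{\theta}}^B_U)$, and $\bdd = (\bdd^A, \bdd^U)$, and adopt the block-diagonal matrices $\Phi$ and $\Omega$ defined in the proof of Corollary~\ref{thm:unit-level-smoothing}. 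With these choices the loss~\eqref{eqn:unit-level-loss-vectorized} is literally of the form $(\bdd - \hat{\bm{\theta}}^B)^T \Phi (\bdd - \hat{\bm{\theta}}^B) + \gamma \bdd^T \Omega \bdd$, so the posterior-risk minimization is precisely the problem solved by Theorem~\ref{thm:general-linear-benchmarked}, with $\Sigma = \Phi + \gamma\Omega$.

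Second, I would verify that the two constraint families in~\eqref{eqn:unit-level-weighted-mean-constraints} collapse to the single linear constraint $M\bdd = \bm{t}$ for the stated $M$ and $\bm{t}$. The top block $[\bm\eta^T \;\; \bm{0}_N]$ reproduces $\sum_i \eta_i \delta^A_i = t^A$, while the $i$th row of the bottom block $[-I_m \;\; \tilde{W}]$ yields $-\delta^A_i + \sum_j w_{ij}\delta^U_{ij} = 0$, i.e., the benchmarking of the within-area weighted mean of the unit estimates to the area estimate. Once this identification is in place, substituting $M$, $\bm{t}$, $\Phi$, and $\Omega$ into the formula of Theorem~\ref{thm:general-linear-benchmarked} reproduces the claimed expression verbatim; no further computation is needed.

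The one hypothesis of Theorem~\ref{thm:general-linear-benchmarked} that genuinely requires checking is that the constraints are consistent and that $M\Sigma^{-1}M^T$ is invertible, and I expect this to be the main (if mild) obstacle. Because $\Sigma^{-1}$ is positive definite, invertibility of $M\Sigma^{-1}M^T$ is equivalent to $M$ having full row rank $m+1$, which I would establish directly from the block structure. The bottom block has rank $m$ owing to its $-I_m$ sub-block, and the rows of $\tilde{W}$ have pairwise disjoint supports (the units of distinct areas are nested and non-overlapping); hence a linear combination of the bottom rows equal to $[\bm\eta^T \;\; \bm{0}_N]$ would force the unit-weight contributions to vanish coordinatewise, giving $\bm\eta = \bm{0}$. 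Thus, whenever $\bm\eta \neq \bm{0}$ and each area carries positive weight, the top row lies outside the row space of the bottom block, $M$ has full row rank, the feasible set is a nonempty affine subspace, and Theorem~\ref{thm:general-linear-benchmarked} applies without modification.
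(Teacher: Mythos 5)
Your proposal is correct and follows essentially the same route as the paper, which proves this corollary simply by stacking the area- and unit-level quantities, identifying the constraints~(\ref{eqn:unit-level-weighted-mean-constraints}) with $M\bdd=\bm t$ for the stated block matrix $M$, and invoking Theorem~\ref{thm:general-linear-benchmarked} with the block-diagonal $\Phi$ and $\Omega$ from Corollary~\ref{thm:unit-level-smoothing}. Your additional verification that $M$ has full row rank (so that $M\Sigma^{-1}M^T$ is invertible) is a sound and welcome supplement that the paper leaves implicit.
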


\section{Lemma on Squared Differences}

\begin{lemma}
  \label{lemma:squared-differences}
  For a suitable matrix $\Omega$,
\[
    \sum_{i,\iprime}{(\delta_i - \delta_\iprime)^2 q_{i\iprime}} = \bm{\delta}^T \Omega \bm{\delta} .
\]
\end{lemma}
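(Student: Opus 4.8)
The plan is to expand the squared differences algebraically and collect terms, which will exhibit $\Omega$ explicitly as (a multiple of) the graph Laplacian associated with $Q$. First I would write
\[
(\delta_i - \delta_\iprime)^2 = \delta_i^2 - 2\delta_i\delta_\iprime + \delta_\iprime^2,
\]
so that the left-hand side splits into three sums, each carrying the weight $q_{i\iprime}$. The remaining work is to evaluate these three sums and reassemble them into a single quadratic form $\bm\delta^T \Omega \bm\delta$.

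Next I would introduce the degrees $d_i = \sum_\iprime q_{i\iprime}$ and the diagonal matrix $D = \text{Diag}(d_i)$. The first sum becomes $\sum_i d_i \delta_i^2 = \bm\delta^T D \bm\delta$. Here I would invoke the symmetry of $Q$ (that $q_{i\iprime} = q_{\iprime i}$, stated when $Q$ is introduced) to argue that the third sum equals the first, contributing a second copy of $\bm\delta^T D \bm\delta$. The cross term is recognized directly as $-2\sum_{i,\iprime} \delta_i \delta_\iprime q_{i\iprime} = -2\bm\delta^T Q \bm\delta$.

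Combining the pieces yields
\[
\sum_{i,\iprime}(\delta_i - \delta_\iprime)^2 q_{i\iprime} = 2\,\bm\delta^T (D - Q)\,\bm\delta,
\]
so the claimed identity holds with $\Omega = 2(D - Q)$, twice the standard graph Laplacian of $Q$. I would note that $\Omega$ so defined is manifestly symmetric, and positive semi-definite since the left-hand side is a sum of nonnegative terms, confirming it is a ``suitable'' choice. There is no genuine obstacle here; the only point requiring care is the summation convention. If $\sum_{i,\iprime}$ ranges over all ordered pairs one obtains the factor of $2$ above, whereas summing over unordered pairs would give $\Omega = D - Q$. Since the statement asks only for \emph{a} suitable $\Omega$, either convention establishes the result, which is a direct consequence of the symmetry of $Q$.
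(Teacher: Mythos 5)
Your proof is correct and follows essentially the same route as the paper: expand the square into three sums and collect them into a quadratic form. The only cosmetic difference is that the paper keeps separate diagonal matrices of row sums and column sums (writing $\Omega = Q^{(r)} + Q^{(c)} - 2Q$ without invoking symmetry), whereas you use the symmetry of $Q$ to merge them into $2D$ immediately; the paper's own remark then observes that these coincide, giving $\Omega = 2(D-Q)$, exactly your answer.
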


\begin{proof}
  Begin by expanding the square and collecting terms:
\par\vspace*{-\parskip}\noindent  
  \begin{align*}
   & \sum_{i,\iprime}{(\delta_i - \delta_{\iprime})^2 q_{i\iprime}}  \\
    &= \sum_{i,\iprime}{\delta_i^2 q_{i\iprime}} + \sum_{i,\iprime}{\delta_{\iprime}^2 q_{i\iprime}} - 2 \sum_{i,\iprime}{\delta_i \delta_{\iprime}q_{i\iprime}}\\
    & =  \sum_{i}{\delta_i^2 \sum_{\iprime}{q_{i\iprime}}} + \sum_{\iprime}{\delta_{\iprime}^2 \sum_{i}{q_{i\iprime}}} - 2 \sum_{i,\iprime}{\delta_i \delta_{\iprime}q_{i\iprime}}.
  \end{align*}
  Now define the diagonal matrix $Q^{(r)}$ with elements
  $q^{(r)}_{ii} = \sum_{\iprime}{q_{i\iprime}}$, and define the diagonal matrix $Q^{(c)}$ with elements
  $q^{(c)}_{jj} = \sum_{i}{q_{ij}}$.  Substituting,
\par\vspace*{-\parskip}\noindent
  \begin{align*}
    \sum_{i,\iprime}{(\delta_i - \delta_{\iprime})^2 q_{i,\iprime}} & = \bm{\delta}^T Q^{(r)} \bm{\delta} + \bm{\delta}^T Q^{(c)}\bm{\delta} -2 \bm{\delta}^T Q \bm{\delta}\\
    & = \bm{\delta}^T\left(Q^{(r)} + Q^{(c)} -2Q\right) \bm{\delta},
  \end{align*}
  which defines $\Omega$.
\end{proof}

\begin{remark}
In an unweighted, undirected graph with adjacency matrix~$A$, the degree matrix $D$ is defined by $D_{ii} = \sum_{j}{A_{ij}}$, $D_{ij} =
0$; the graph Laplacian in turn is $L=D-A$ \citep{MEJN-on-networks}.  If $Q$ is an
adjacency matrix, then $Q^{(r)} = Q^{(c)} = D$, and $\Omega = 2L$.
\end{remark}

\begin{remark}
By construction, $\Omega$ is clearly positive semi-definite.
It is not positive definite, because $(1\;1\; \cdots\; 1)$ is always an eigenvector,
of eigenvalue zero.  This corresponds to the fact that adding the same constant
to each $\delta_i$ does not change $\sum_{i,\iprime}{(\delta_i -
  \delta_\iprime)^2 q_{i,\iprime}}$.  (These are of course basic properties of
graph Laplacians.)
\end{remark}

\section{Residual Bootstrap}
\label{app:bootstrap}

We consider the model
\par\vspace*{-\parskip}\noindent
\begin{align*}
y_i & = \theta_i + U_i\\
\theta_i &= \bm x_i^T\bm\beta + \epsilon_i
\end{align*}
where $i=1,\ldots,m$ and 
where the observational noise vector $\bm{U}$ has a known diagonal
covariance matrix $\Sigma_U$, with the $i$th diagonal element of $\Sigma_U$ denoted by $\sigma^2_{U,i}$.  We impose constraints (benchmarking, smoothing) on
the estimates of the $\theta_i$ in order to better regularize and borrow
strength.  Call the constrained estimates $\tilde{\bm{\theta}}^{BM}$.  Then we can
define residuals for each observation:
\par\vspace*{-\parskip}\noindent
\begin{align*}
r_i = y_i - \tilde{\theta}^{BM}_i.
\end{align*}
If we standardize these as
\par\vspace*{-\parskip}\noindent
\begin{align*}
\tilde{r}_i = \frac{y_i - \tilde{\theta}^{BM}_i}{\sigma_{U,i}^{}},
\end{align*}
we get quantities which should have the same distribution for all areas, if our
constraints are valid and our model fits well.  We then bootstrap by re-sampling these residuals:  
\par\vspace*{-\parskip}\noindent
\begin{align*}
u_i^\star & \stackrel{\text{iid}}{\sim} \tilde{\bm{r}}\\
y^{\star}_i & = \theta^{BM}_i + u^\star_i\sigma_{U,i}^{}
\end{align*}
where $i=1,\ldots,m$.
Note that the first line of the above model simply means that we draw iid random variables $u_1^\star,\ldots,u_m^\star$ where each $u_i^\star$ is equal to each of $\tilde r_1,\ldots,\tilde r_m$ with probability $1/m$.
Re-sampling--based bootstraps are commonly used in assessing uncertainty for
regression models.  They presume the correctness of the functional form of the
regression, but not of distributional assumptions about the
noise.\footnote{There is also a ``wild bootstrap'' \citep[p.\ 272]{Davison-Hinkley-bootstrap} which would evade having to know the
  observational noise variances, at some cost in efficiency.}

To summarize, the resampling procedure would be this:
\begin{enumerate}
\item From data $(\bm{x}, \bm{y})$, obtain constrained estimates
  $\tilde{\bm{\theta}}^{BM}$ and residuals $\bm r = \bm{y} -
  \tilde{\bm{\theta}}^{BM}$.
\item Calculate standardized residuals $\tilde{\bm{r}} =
  \Sigma^{-1/2}_{U}\bm r$.
\item Repeat $B$ times:
  \begin{enumerate}
  \item Draw $\bm u^\star$ by resampling with replacement from $\tilde{\bm{r}}$.
  \item Set $\bm{y}^\star = \tilde{\bm{\theta}}^{BM} + \Sigma^{-1/2}_{U}\bm{u}^\star$.
  \item Re-run inference on $(\bm{x}, \bm{y}^\star)$ to get
    $\tilde{\bm{\theta}}^{BM\star}$.
  \end{enumerate}
\item Use the distribution of $\tilde{\bm{\theta}}^{BM*}$ in bootstrap calculations.
\end{enumerate}

\end{document}